\newcommand*{\algrule}[1][\algorithmicindent]{\makebox[#1][l]{\hspace*{0.3em}\vrule height .75\baselineskip depth .25\baselineskip}}%
\def\ALG@printindent{%
    \ifnum \theALG@nested>0
    \ifx\ALG@text\ALG@x@notext
    \addvspace{-1.35pt}
    \else
    \unskip
    \ALG@printindent@tempcnta=1
    \loop
    \algrule[\csname ALG@ind@\the\ALG@printindent@tempcnta\endcsname]%
    \advance \ALG@printindent@tempcnta 1
    \ifnum \ALG@printindent@tempcnta<\numexpr\theALG@nested+1\relax
    \repeat
    \fi
    \fi
}%
\patchcmd{\ALG@doentity}{\noindent\hskip\ALG@tlm}{\ALG@printindent}{}{\errmessage{failed to patch}}
\newcolumntype{P}[1]{>{\centering\arraybackslash}p{#1}}
\newcolumntype{M}[1]{>{\centering\arraybackslash}p{#1}}
\newtheorem{lemma}{Lemma}
\newtheorem{theorem}{Theorem}
\newcommand{\clique}[3]{\ensuremath{(#1,[#2,#3])}}
\newcommand{\ie}{{\em i.e.}}
\begin{document}


\author[1]{Tiphaine Viard}
\author[2]{Cl\'{e}mence Magnien}
\author[2]{Matthieu Latapy}
\affil[1]{CEDRIC CNAM, F-75003 Paris, France}
\affil[2]{Sorbonne Universit\'{e}s, UPMC Univ Paris 06, UMR 7606, LIP6, F-75005, Paris, France}


\title{Enumerating maximal cliques\\ in link streams with durations}
\date{}

\maketitle

\begin{abstract}
Link streams model interactions over time, and a clique in a link stream is defined as a set of nodes and a time interval such that all pairs of nodes in this set interact permanently during this time interval. This notion was introduced recently in the case where interactions are instantaneous. 
We generalize it to the case of interactions with durations  and show that the instantaneous case actually is a particular case of the case with durations.
We propose an algorithm to detect maximal cliques that improves our previous one for instantaneous link streams, 
and performs better than the state of the art algorithms  in several cases of interest.
\end{abstract}

\noindent {\bf Keywords:}\\
link streams, temporal networks, time-varying graphs, cliques, graphs, algorithms


\section{Introduction}

A {\bf graph} is a pair of sets $G=(V,E)$ where $V$ is a set of nodes and $E\subseteq V\times V$ a set of links. If there is no distinction between $(u,v)$ and $(v,u)$ for $u$ and $v$ in $V$, then $G$ is undirected. If for all $(u,v)$ in $E$, $u \neq v$ then $G$ is simple. 
One generally considers undirected simple graphs. 
A clique of $G$ is a set of nodes $C\subseteq V$ such that all nodes of $C$ are linked to each other, 
\ie{} for all $\{u, v\}\subseteq C$, $(u,v)\in E$.
A clique $C$ is maximal if there is no other clique $C'$ such that $C\subset C'$. Enumerating maximal cliques of a graph is one of the most fundamental problems in computer science~\cite{Augustson1970, Bron1973, Tomita2006, Eppstein2011}.

An {\bf instantaneous link stream} is a triplet $L=(T,V,E)$ where $T$ is a time interval, $V$ a set of nodes and $E\subseteq T\times V\times V$ a set of instantaneous links. If there is no distinction between $(t,u,v)$ and $(t,v,u)$, then $L$ is undirected. If for all $(t,u,v)$ in $E$, $u \neq v$ then $L$ is simple. For any given duration $\Delta$, we introduced in~\cite{viard2016computing} the notion of {\bf $\Delta$-clique} in a simple undirected link stream $L=(T,V,E)$: it is a pair $C = (X, [x,y])$ with $X\subseteq V$ and $[x,y]\subseteq T$ such that for all $\{u, v\}\subseteq X$ and $I \subseteq [x,y]$ with $|I|=\Delta$ there is a $t$ in $I$ such that $(t,u,v) \in E$. In other words, there is an interaction between all pairs of nodes in $X$ at least once every $\Delta$ within $[x,y]$. A clique $C = (X, [x,y])$ is maximal if there is no other clique $C' = (X', [x',y'])$ such that $C'\neq C$, $X \subseteq X'$ and $[x,y] \subseteq [x',y']$. We proposed a first algorithm to enumerate all maximal $\Delta$-cliques of an instantaneous link stream~\cite{viard2016computing}, recently improved by adapting the Bron-Kerbosch algorithm~\cite{himmel2016enumerating,himmel2017adapting}.

A {\bf link stream with durations}, or simply link stream, is a triplet $L=(T,V,E)$ where $T$ is a time interval, $V$ a set of nodes and $E\subseteq T\times T\times V\times V$ a set of links such that for all links $(b,e,u,v)$ in $E$ we have $e\ge b$\,\footnote{Though
in theory we can consider that time is either discrete or continuous,
in all practical cases the datasets have an intrinsic time resolution
and time is therefore discrete in practice.}. 
We call $e-b$ the duration of the link. If there is no distinction between $(b,e,u,v)$ and $(b,e,v,u)$, then $L$ is undirected. If for all $(b,e,u,v)$ in $E$, $u \neq v$ and for all $(b,e,u,v)$ and $(b',e',u,v)$ in $E$, $[b,e]\cap[b',e']=\emptyset$ then $L$ is simple. In the remainder of this paper, unless explicitly specified, we will consider simple undirected link streams only.

A {\bf clique} in a link stream with durations is a pair $C = (X, [x,y])$ with $X\subseteq V$ and $[x,y]\subseteq T$ such that for all $\{u,v\}\subseteq X$ there is a link $(b,e,u,v)$ in $E$ such that $[x,y] \subseteq [b,e]$. In other words, all pairs of nodes in $X$ are continuously linked together from $x$ to $y$. A clique $C = (X, [x,y])$ is maximal if there is no other clique $C' = (X', [x', y'])$ such that $C'\neq C$, $X \subseteq X'$ and $[x,y] \subseteq [x',y']$. See Figure~\ref{fig:clique-example} for an illustration.

\begin{figure}[htbp]
	\begin{center}
		\includegraphics[width=0.7\linewidth]{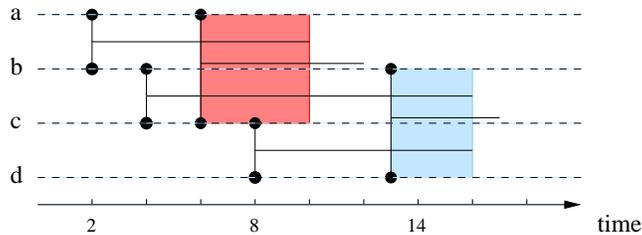}
		
        \caption{Maximal cliques in the link stream $L=(T,V,E)$, with $T=[0,20]$, $V=\{a,b,c,d\}$ and $E=\{(2,10,a,b), (4,16,b,c), (6,12,a,c), (8,16,c,d), (13,17,b,d)\}$. Two maximal cliques of $L$ are highlighted: in red, $(\{a,b,c\}, [6,10])$ is a maximal clique since $[6,10]$ is the largest interval over which nodes $a,b$, and $c$ all interact together and they don't interact with $d$ over this time interval. Similarly, $(\{b,c,d\}, [13,16])$ is a maximal clique, since $a$ does not interact with $b,c,d$ over $[13,16]$, and there is no larger interval such that $b,c,d$ all interact together. There are no other maximal cliques in $L$ involving $3$ nodes, but there are many other maximal cliques in $L$, such as $(\{c,d\}, [8,16])$ or $(\{a,c\}, [6,12])$ for instance.}
		\label{fig:clique-example}
	\end{center}
\end{figure}

Cliques in link streams with durations (and $\Delta$-cliques in instantaneous link streams)
bring valuable information in the study of different kinds of datasets;
for instance they indicate malicious computers coordinating their actions~\cite{Viard2018Discovering}.
Likewise, co-presence relations between animals is a key source of insight in ethology~\cite{MACCARRON2016119}, and cliques in the link streams with duration modeling such data may indicate
significant meetings. Many other fields may benefit from clique computations in link streams with durations in a similar way.

In this paper, we first extend our algorithm for maximal $\Delta$-cliques in instantaneous link streams~\cite{viard2016computing} to enumerate all maximal cliques in link streams with durations. 
The obtained algorithm is significantly simpler than the previous version,
and has a slightly lower complexity;
we show that it is possible to use it to enumerate maximal cliques in instantaneous link streams too, 
making it both more general and more efficient than our previous algorithm.
Experiments show that its running time is better than our previous algorithm,
as expected,
but also that it outperforms the more recent algorithm of Himmel {\em et al.}~\cite{himmel2016enumerating,himmel2017adapting}
in several cases of practical interest.

\section{Algorithm}

Like in~\cite{viard2016computing} our algorithm (Algorithm~\ref{algorithm}) relies on a set $S$ of previously computed cliques that we call candidates, and a set $M$ of already seen cliques. 
We initially populate both sets with  the trivial clique $(\{u,v\}, [b,b])$,
for all links $(b,e,u,v)\in E$ (Line~\ref{alg:init_state}) (finding cliques involving only one node is trivial and makes little sense, so we ignore them). 
Then, our algorithm iteratively picks and processes an element $(X, [x,y])$ from $S$ (Line~\ref{alg:get_clique}), until $S$ is empty (\emph{while} loop from Line~\ref{alg:begin_loop} to Line~\ref{alg:end_loop}). Processing $(X, [x,y])$ consists in searching for nodes $v\not\in X$ such that $(X\cup\{v\}, [x,y])$ is a clique (Lines~\ref{alg:add_node_begin} to~\ref{alg:add_node_end}), and for times $y'>y$ such that $(X,[x,y'])$ is a clique (Lines~\ref{alg:get_l} to \ref{alg:add_clique_time_e}).

For each node $v$ not in $X$, Line~\ref{alg:check_clique_node} checks that for all $u$ in $X$, there exists a link $(b,e,u,v)$ in the stream, with $[x,y]\subseteq [b,e]$. If $v$ satisfies this property, then $(X, [x,y])$ is not maximal (Line~\ref{alg:nodeismaxfalse}) and if $(X\cup\{v\}, [x,y])$ has not already been seen (Line~\ref{alg:ifXvinM}) then we add to both $S$ and $M$ (Line~\ref{alg:add_clique_node}).

The value of $l$ computed at Line~\ref{alg:get_l} is the largest time $l > y$ such that $(X, [x,l])$ is a clique. Line~\ref{alg:check_e_l} checks that this clique is different from $(X, [x,y])$, \ie{} $l \neq y$. In this case, $(X, [x,y])$ is not maximal (Line~\ref{alg:eismaxfalse}), and if the new clique $(X, [x,l])$ is new (Line~\ref{alg:ifeinM}) we add it to $S$ and $M$ (Line~\ref{alg:add_clique_time_e}).

If no node $v$ or time $l$ satisfies the conditions above, then the clique $(X, [x,y])$ is maximal and \emph{isMax} is true when we reach Line~\ref{alg:if_isMax}; we add the maximal clique to the output (Line~\ref{alg:add_c_r}).


\begin{algorithm}
\caption{Maximal cliques of a simple undirected link stream with durations}
\label{algorithm}
\renewcommand{\algorithmicrequire}{\textbf{input:} a simple undirected link stream with durations $L = (T,V,E)$}
\renewcommand{\algorithmicensure}{\textbf{output:} the set of all maximal cliques in $L$ involving at least two nodes}
\algorithmicrequire\\
\algorithmicensure

\begin{algorithmic}[1]
\State $S \gets \emptyset$, $R \gets \emptyset$, $M \gets \emptyset{}$
\State for $(b,e,u,v) \in E$: add $\clique{\{u,v\}}{b}{b}$ to $S$ and to $M$ \label{alg:init_state}
\While{$S \ne \emptyset$} \label{alg:begin_loop}
 \State take and remove $\clique{X}{x}{y}$ from $S$ \label{alg:get_clique}
 \State set isMax to True
 \For{$v$ in $V\setminus X$}\label{alg:add_node_begin}
  \If{$\clique{X\cup \{v\}}{x}{y}$ is a clique}\label{alg:check_clique_node}
   \State set isMax to False \label{alg:nodeismaxfalse}
   \If{$\clique{X\cup \{v\}}{x}{y}$ not in $M$} \label{alg:ifXvinM}
    \State add $\clique{X\cup \{v\}}{x}{y}$ to $S$ and $M$ \label{alg:add_clique_node} 
   \EndIf
  \EndIf
 \EndFor \label{alg:add_node_end}

 \State $l \gets \min(e: \exists (b,e,u,v)\in E \mbox{ such that } u,v\in X \mbox{ and } [x,y] \subseteq [b,e])$
	\label{alg:get_l}
 \If{$y \ne l$}\label{alg:check_e_l}
   \State set isMax to False \label{alg:eismaxfalse}
   \If{$\clique{X}{x}{l}$ not in $M$}\label{alg:ifeinM}
     \State add $\clique{X}{x}{l}$ to $S$ and $M$ \label{alg:add_clique_time_e}
   \EndIf
   \EndIf\label{alg:end_l}
 \If{isMax}\label{alg:test_max} \label{alg:if_isMax}
  \State add $\clique{X}{x}{y}$ to $R$\label{alg:add_c_r}
 \EndIf
\EndWhile\label{alg:end_loop}
\State \Return $R$
\end{algorithmic}
\end{algorithm}


\begin{theorem}[Correctness]
Given a simple undirected link stream with durations, Algorithm~\ref{algorithm} computes the set of all its maximal cliques involving at least two nodes.
\end{theorem}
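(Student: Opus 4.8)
The plan is to prove the two inclusions separately: \emph{soundness} (every clique added to $R$ is a maximal clique with at least two nodes) and \emph{completeness} (every maximal clique with at least two nodes is added to $R$). Both would rest on a structural invariant that I would establish first by induction on the execution: every pair \clique{X}{x}{y} ever inserted into $S$ (equivalently into $M$) is a clique with $|X|\ge 2$, and moreover its left endpoint $x$ is the begin time of an actual link between two nodes of $X$, \ie{} there exist $u,v\in X$ and $e$ with $(x,e,u,v)\in E$ and $y\le e$. This holds for the seeds \clique{\{u,v\}}{b}{b} of Line~\ref{alg:init_state}, and is preserved by both operations: adding a node (Line~\ref{alg:add_clique_node}) leaves $x$ and the originating link untouched, and extending to \clique{X}{x}{l} (Line~\ref{alg:add_clique_time_e}) keeps $x$ fixed while $l\le e$ by the definition of $l$ at Line~\ref{alg:get_l}.

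For soundness, suppose \clique{X}{x}{y} reaches Line~\ref{alg:add_c_r}, so \emph{isMax} is still True: no $v\notin X$ yields a clique \clique{X\cup\{v\}}{x}{y}, and $l=y$. I would show this forbids any strictly larger clique \clique{X'}{x'}{y'} with $X\subseteq X'$ and $[x,y]\subseteq[x',y']$. If $X'\supsetneq X$, restricting the interval shows \clique{X\cup\{v\}}{x}{y} is a clique for any $v\in X'\setminus X$, contradicting the node test. If $X'=X$ and $y'>y$, then \clique{X}{x}{y'} is a clique, contradicting $l=y$ (since $l$ is by construction the largest time for which \clique{X}{x}{l} is a clique). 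The only remaining case is $X'=X$, $y'=y$, $x'<x$: here the originating pair $u,v\in X$ would need a link covering $[x',y]$, but the invariant gives a link $(x,e,u,v)\in E$ beginning exactly at $x$, and simplicity (disjointness of the intervals of a given pair) forbids any link of $\{u,v\}$ from covering $[x',y]$ when $x'<x$. Hence no such extension exists and \clique{X}{x}{y} is maximal; combined with $|X|\ge 2$, this gives soundness.

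For completeness, let \clique{X}{x}{y} be a maximal clique with $|X|\ge 2$. First I would note that maximality on the left endpoint forces $x=\max_{\{u,v\}\subseteq X} b_{uv}$, so some pair $\{u_0,v_0\}\subseteq X$ has a link beginning exactly at $x$ (otherwise \clique{X}{x'}{y} would be a clique for some $x'<x$); thus the seed \clique{\{u_0,v_0\}}{x}{x} is inserted at Line~\ref{alg:init_state}. I would then exhibit one explicit path of operations reaching \clique{X}{x}{y}: since every pair of $X$ is linked throughout $[x,y]$ and in particular at $x$, the nodes of $X$ can be added one at a time at the fixed interval $[x,x]$, each intermediate \clique{Y}{x}{x} with $\{u_0,v_0\}\subseteq Y\subseteq X$ being a clique; once \clique{X}{x}{x} is reached, a single time extension produces \clique{X}{x}{l} with $l=y$, because maximality forces $y=\min_{\{u,v\}\subseteq X}e_{uv}$ (the case $y=x$ being already the target). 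It then remains to argue that such a reachable clique is actually enumerated: every element put into $S$ is also put into $M$ and is eventually removed and processed (the \emph{while} loop drains $S$, while $M$ grows monotonically inside a finite universe of cliques, which also yields termination), and whenever a clique is processed each of its valid node- and time-children is guaranteed to be in $M$ afterwards (Lines~\ref{alg:add_clique_node} and~\ref{alg:add_clique_time_e}). Propagating this along the path places \clique{X}{x}{y} in $M$, hence it is processed, found maximal exactly as in the soundness argument, and added to $R$.

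I expect the main obstacle to be the soundness step, specifically the would-be extension to the left of $x$: the algorithm never attempts to decrease a left endpoint, so global maximality does not follow from the node- and time-tests alone. The argument must instead combine the structural invariant with the simplicity hypothesis to certify that the left endpoint of every enumerated clique already sits at the begin time of one of its links and therefore cannot be pushed further left. By contrast, the completeness direction is comparatively routine once the seed existence and the ``add all nodes at $[x,x]$, then extend once'' path are in place.
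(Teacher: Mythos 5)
Your proposal is correct and follows essentially the same route as the paper: the same invariant that every element of $S$ is a clique anchored at the begin time of a link between two of its nodes, the same use of simplicity to rule out left extensions, and the same completeness path (seed at the pair whose link starts at $x$, add nodes one at a time on $[x,x]$, then a single time extension to $y$). Your treatment is somewhat more explicit than the paper's on a few points the paper leaves terse (the disjointness argument against left extension, the $y=x$ case, and termination), but these are refinements rather than a different argument.
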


\noindent
We first show that all the elements in the output of Algorithm~\ref{algorithm} are cliques, then that they are maximal, and finally that all maximal cliques are in this output.

\begin{lemma}
In Algorithm~\ref{algorithm}, all elements of $S$ are cliques of $L$.
\end{lemma}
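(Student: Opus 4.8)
The plan is to prove the stronger invariant that \emph{every} element ever inserted into $S$ is a clique of $L$; since $S$ only ever loses elements by removal (Line~\ref{alg:get_clique}), this immediately yields that all elements of $S$ are cliques at every point of the execution. I would argue this by induction on the order in which elements are inserted into $S$, observing that insertions happen in exactly three places: the initialization at Line~\ref{alg:init_state}, the node-extension at Line~\ref{alg:add_clique_node}, and the time-extension at Line~\ref{alg:add_clique_time_e}.

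For the base case, the elements inserted at Line~\ref{alg:init_state} are the trivial cliques $\clique{\{u,v\}}{b}{b}$ arising from links $(b,e,u,v)\in E$. Here $X=\{u,v\}$ consists of two distinct nodes (the stream is simple), $b\in T$, and the single pair $\{u,v\}$ is covered by the link $(b,e,u,v)$, since $e\ge b$ gives $[b,b]\subseteq[b,e]$; hence this is a clique by definition.

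For the inductive step, I would consider an insertion performed while processing some $\clique{X}{x}{y}$ taken from $S$ at Line~\ref{alg:get_clique}. This element was itself inserted at an earlier step, so by the induction hypothesis it is a clique: for every pair $\{u,v\}\subseteq X$ there is a link $(b,e,u,v)\in E$ with $[x,y]\subseteq[b,e]$. An insertion at Line~\ref{alg:add_clique_node} adds $\clique{X\cup\{v\}}{x}{y}$, and this happens only when the test at Line~\ref{alg:check_clique_node} succeeds, i.e. every $u\in X$ is linked to $v$ over $[x,y]$; combining these new pairs with the pairs internal to $X$ (supplied by the hypothesis) shows that $\clique{X\cup\{v\}}{x}{y}$ is a clique.

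The delicate case, which I expect to be the main obstacle, is the time-extension at Line~\ref{alg:add_clique_time_e}, which inserts $\clique{X}{x}{l}$ with $l$ as defined at Line~\ref{alg:get_l}. Here I must show that widening the right endpoint from $y$ to $l$ keeps every pair covered. For each pair $\{u,v\}\subseteq X$ the hypothesis gives a link $(b_{uv},e_{uv},u,v)$ with $b_{uv}\le x\le y\le e_{uv}$; by simplicity this is the \emph{unique} link of $\{u,v\}$ meeting $[x,y]$, so the minimization at Line~\ref{alg:get_l} ranges exactly over these links. Since $l$ is the minimum of the ending times over all such links, we have $l\le e_{uv}$ for every pair and also $l\ge y\ge x$, whence $[x,l]\subseteq[b_{uv},e_{uv}]$ for every pair and $\clique{X}{x}{l}$ is a clique. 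The two points to make fully precise are that simplicity forces a single covering link per pair and that the minimality of $l$ guarantees the extended interval still lies inside each of them.
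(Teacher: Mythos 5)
Your proof is correct and follows essentially the same route as the paper: establish the invariant at the three insertion points (initialization, node-extension, time-extension), with the time-extension case handled by noting that $l$ is the minimum ending time over the covering links, so $[x,l]$ remains inside each of them. You are somewhat more explicit than the paper on that last step (and the uniqueness-by-simplicity remark, while true, is not actually needed since every covering link has end at least $l$), but the argument is the same.
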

\begin{proof}
$S$ initially contains cliques (Line~\ref{alg:init_state}) and Line~\ref{alg:add_clique_node} clearly preserves this property. The value $l$ computed at Line~\ref{alg:get_l} is the smallest value such that there exists a link of the form $(b,l,u,v) \in E$ for any two nodes $u,v \in X$. Since $[x,y] \subseteq [b,l]$ this means that $(X, [x,l])$ is a clique, and so the elements added at Line~\ref{alg:add_clique_time_e} are also cliques.
\end{proof}

\begin{lemma}
  \label{lem:max}
  All the elements of the set returned by Algorithm~\ref{algorithm} are maximal cliques of $L$.
\end{lemma}
\begin{proof}
A clique $(X, [x,y])$ from $S$ is added to $R$ only if isMax is True (Lines~\ref{alg:if_isMax} and~\ref{alg:add_c_r}).
If $(X, [x,y])$ is not maximal, 
then at least one of three conditions is true:
(1) there exists a node $v$ such that $\clique{X\cup \{v\}}{x}{y}$ is a clique, and then isMax is set to False at Lines~\ref{alg:check_clique_node} and~\ref{alg:nodeismaxfalse}, or (2) there exists a time $f < x$ such that $(X, [f,y])$ is a clique, or (3) there exists a time $l>y$ such that $(X, [x,l])$ is a clique. The second case cannot occur as all elements $(X, [x,y])$ of $S$ all involve (from its initialization and by recurrence) a link starting at $x$. If we are in the third case, then Line~\ref{alg:get_l} computes a value of $l$ satisfying $l>y$, which implies that the condition of Line~\ref{alg:check_e_l} is satisfied, and so Line~\ref{alg:eismaxfalse} sets isMax to False. Finally, if a clique of $S$ is not maximal it cannot be added to $R$.
%
\end{proof}

\medskip

\noindent
In order to prove the final result, we need the following lemma.


\begin{lemma}
  \label{lemma:uvc}
Let $L=(T,V,E)$ be a simple link stream, and let $C =(X, [x,y])$ be a maximal clique of $L$. Then there exists a link $(b,e,u,v)$ in $E$ such that $u$ and $v$ are in $X$ and $b=x$.
\end{lemma}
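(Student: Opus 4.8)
The plan is to argue by contradiction: suppose $C = (X,[x,y])$ is a maximal clique but no link $(b,e,u,v) \in E$ with $u,v \in X$ satisfies $b = x$. The intuition is that $x$ is only forced to be the left endpoint of the clique because some pair of nodes begins interacting exactly at $x$; if no such pair exists, then every relevant link already covers an interval starting strictly before $x$, and we should be able to extend the clique leftward, contradicting maximality.

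More precisely, for each pair $\{u,v\} \subseteq X$, the clique condition guarantees a link $(b_{uv}, e_{uv}, u, v) \in E$ with $[x,y] \subseteq [b_{uv}, e_{uv}]$, so $b_{uv} \le x$. By assumption none of these satisfies $b_{uv} = x$, so in fact $b_{uv} < x$ for every pair. The key step is then to set
\[
f = \max\{\, b_{uv} : \{u,v\} \subseteq X \,\},
\]
which is a maximum over a finite set (since $X$ is finite, so there are finitely many pairs) and satisfies $f < x$. I would then verify that $(X, [f, y])$ is a clique: for each pair $\{u,v\}$ we have $b_{uv} \le f < x \le y \le e_{uv}$, hence $[f,y] \subseteq [b_{uv}, e_{uv}]$, so the same witnessing link covers the enlarged interval. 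Since $[f,y] \supsetneq [x,y]$ with $X \subseteq X$, this exhibits a clique strictly containing $C$, contradicting the maximality of $C$.

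The main subtlety to handle carefully is the use of simplicity of the link stream. Simplicity ensures that for each pair $\{u,v\}$ the intervals of distinct links are disjoint, so the link $(b_{uv}, e_{uv})$ covering $[x,y]$ is unique and well-defined; without this one would have to argue that a single link (not a union of several) covers the whole interval $[f,y]$, which is exactly what the continuous-linking definition of a clique demands. I expect this to be the only place where genuine care is needed, since the existence of the link for the original interval $[x,y]$ is immediate from the definition of a clique, and extending it to $[f,y]$ is then just the observation that $[f,y] \subseteq [b_{uv},e_{uv}]$ still holds. The rest is the routine contradiction argument. One should also note this is the mirror image of case (2) invoked in the proof of Lemma~\ref{lem:max}, so the two results dovetail: this lemma justifies precisely why the algorithm never needs to search for times $f < x$.
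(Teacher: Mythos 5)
Your proof is correct and follows essentially the same route as the paper: assume no link between nodes of $X$ starts at $x$, observe that every witnessing link then has $b_{uv}<x$, and conclude that the clique extends leftward, contradicting maximality. The paper compresses the last step into ``then clearly $C$ is not maximal,'' whereas you make it explicit by taking $f=\max_{\{u,v\}\subseteq X} b_{uv}$ and checking $[f,y]\subseteq[b_{uv},e_{uv}]$ for every pair; this is a faithful (and slightly more careful) rendering of the same argument, and your remark about simplicity, while not strictly needed since a single witnessing link per pair already suffices, does no harm.
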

\begin{proof}
Assume this is false. Then, for all $u,v\in X$, there is a link $(b,e,u,v)$ such that $b < x$ and $e \ge y$. Then clearly $C$ is not maximal.
\end{proof}

\begin{lemma}
\label{lem:allmax}
The set returned by Algorithm~\ref{algorithm} contains all maximal cliques of $L$.
\end{lemma}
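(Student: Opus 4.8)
The plan is to fix an arbitrary maximal clique $C = \clique{X}{x}{y}$ of $L$ with $|X|\ge 2$ and to show that Algorithm~\ref{algorithm} writes $C$ to $R$; combined with Lemma~\ref{lem:max}, this proves that $R$ is exactly the set of maximal cliques. The whole difficulty reduces to showing that $C$ is at some point inserted into $S$: once $C$ is taken from $S$ and processed, its maximality forbids any node $v$ making $\clique{X\cup\{v\}}{x}{y}$ a clique, so the test at Line~\ref{alg:check_clique_node} never succeeds, and it forces the value $l$ of Line~\ref{alg:get_l} to equal $y$, so the test at Line~\ref{alg:check_e_l} fails; hence \emph{isMax} remains True and $C$ is added to $R$ at Line~\ref{alg:add_c_r}.

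First I would record two bookkeeping facts. (i) Every clique ever added to $M$ is, at the same line, also added to $S$ (Lines~\ref{alg:init_state}, \ref{alg:add_clique_node}, \ref{alg:add_clique_time_e}), and since each distinct clique is inserted at most once (guarded by the ``not in $M$'' tests) the loop performs finitely many insertions and therefore terminates; consequently every clique that enters $M$ is eventually taken from $S$ and processed. (ii) The guards ``not in $M$'' never obstruct the argument: if a clique we need is already in $M$, then by (i) it has already been inserted into $S$, which is all we require. Thus it suffices to prove that $C$ enters $M$.

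Next I would build $C$ up from a trivial clique. By Lemma~\ref{lemma:uvc} there is a link $(b,e,u,v)\in E$ with $u,v\in X$ and $b=x$, so Line~\ref{alg:init_state} puts $\clique{\{u,v\}}{x}{x}$ into $M$. For every $Y$ with $\{u,v\}\subseteq Y\subseteq X$ the pair $\clique{Y}{x}{x}$ is a clique, because $[x,x]\subseteq[x,y]$ and $C$ is a clique on $X$. I would then prove, by induction on $|X\setminus Y|$, that $\clique{Y}{x}{x}\in M$ implies $\clique{X}{x}{x}\in M$: the base case $Y=X$ is immediate, and in the inductive step one processes $\clique{Y}{x}{x}$, picks some $w\in X\setminus Y$, observes that $\clique{Y\cup\{w\}}{x}{x}$ is a clique so that Line~\ref{alg:check_clique_node} succeeds and it enters $M$ (Line~\ref{alg:add_clique_node}), and then applies the hypothesis to $Y\cup\{w\}$. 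Taking $Y=\{u,v\}$ yields $\clique{X}{x}{x}\in M$.

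Finally I would extend the interval in a single step. When $\clique{X}{x}{x}$ is processed, Line~\ref{alg:get_l} computes $l$ as the minimum end time over the links between pairs of $X$ whose interval contains $[x,x]$. Since $L$ is simple, for each pair of $X$ the link covering the instant $x$ is unique and coincides with the link witnessing $C$ over $[x,y]$; hence $l$ is the minimum of those end times, which by maximality of $C$ equals $y$. If $y>x$ the test of Line~\ref{alg:check_e_l} fires and $C=\clique{X}{x}{l}$ is inserted into $M$ at Line~\ref{alg:add_clique_time_e}; if $y=x$ then $C=\clique{X}{x}{x}$ is already in $M$. Either way $C\in M$, so by the bookkeeping facts it is processed and, as observed at the outset, written to $R$. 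I expect the main obstacle to be precisely this last step: justifying that the single time-extension lands exactly on $y$ rather than overshooting or stopping short. This is where simplicity of $L$ is indispensable --- it makes each pair's covering link unique, so the computed $l$ is well defined and equal to the clique's right endpoint --- together with the maximality of $C$, which pins $l=y$. Deliberately separating a pure node-growth phase at the degenerate interval $[x,x]$ from this one-step extension is what keeps the interleaving of the two operations inside a processing step from complicating the argument.
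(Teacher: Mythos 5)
Your proof is correct and follows essentially the same route as the paper's: seed with the trivial clique $\clique{\{u,v\}}{x}{x}$ guaranteed by Lemma~\ref{lemma:uvc}, grow the node set one vertex at a time over the degenerate interval $[x,x]$, then reach $C$ by a single time-extension forced to land on $y$ by simplicity and maximality. You are somewhat more careful than the paper on the bookkeeping (termination, the ``not in $M$'' guards, and the $y=x$ case), but the underlying argument is the same.
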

\begin{proof}
Let us consider a maximal clique $C = (X,[x,y])$ of $L$. If $C$ is in $S$ at some stage then it is easy to check that the algorithm adds it to $R$. We therefore show that $C$ is in $S$ at some stage.

Let us denote by $k$ the size of $X$, \ie{} $|X| = k$.
Let $u,v \in X$ such that there is a link $(x,e,u,v)$ in $E$; such nodes exist according to Lemma~\ref{lemma:uvc}.
Let $w_0=u$, $w_1 =v$, and let $(w_2, \ldots, w_{k-1})$ be an arbitrary ordering of all nodes in $X \setminus \{u,v\}$.
For all $i$, $1\le i \le k-1$, we consider the clique $C_i = (\{w_0,w_1, \ldots, w_i\}, [x,x])$.

We will show that, if $C_i$ is in $S$ at some stage, then the algorithm adds $C_{i+1}$ to $S$ when $C_i$ is the clique picked in $S$ at Line~\ref{alg:get_clique}.
Indeed, when this happens, Lines~\ref{alg:add_node_begin} to~\ref{alg:add_node_end} build $(X_i \cup \{w_{i+1}\}, [x,x]) = C_{i+1}$ from it and thus add $C_{i+1}$ to $S$ and $M$ at Line~\ref{alg:add_clique_node}. 

$C_1$ is added to $S$ by Line~\ref{alg:init_state}, and
if $C_i$ is in $S$ at some stage, then the algorithm adds $C_{i+1}$ to $S$. 
Therefore, $C_{k-1}$ is added to $S$ at some stage.

Consider now the iteration at which $C_{k-1}$ is picked from $S$ at Line~\ref{alg:get_clique}. Then, the value $l$ computed at Line~\ref{alg:get_l} is equal to $y$. Otherwise, either $l >y$ and then $C$ would not maximal, or $l<y$ and then $C$ would not be a clique. Therefore, the clique added to $S$ at Line~\ref{alg:add_clique_time_e} is $C$.
\end{proof}

\begin{theorem}[Complexity]
Given a link stream $L=(T,V,E)$ with $|V|=n$ and $|E|=m$, Algorithm~\ref{algorithm} enumerates all maximal cliques of $L$ in ${\cal O}(2^nn^3m^2+2^nn^2m^2\log m)$ time.
\end{theorem}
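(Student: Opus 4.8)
The plan is to bound the running time as the product of two quantities: the number of iterations of the \emph{while} loop (Lines~\ref{alg:begin_loop}--\ref{alg:end_loop}) and the cost of a single iteration. The crucial observation is that, thanks to the membership tests against $M$ at Lines~\ref{alg:ifXvinM} and~\ref{alg:ifeinM}, no clique is ever inserted into $S$ twice; since each iteration removes one clique from $S$, the number of iterations is at most the number of \emph{distinct} cliques that can ever be placed in $M$. So first I would bound that number.

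A clique in $M$ has the form $(X,[x,y])$. The set $X$ ranges over subsets of $V$, giving at most $2^n$ possibilities. For the endpoints I would argue by induction on the way cliques are produced that $x$ is always the beginning $b$ of some link: this holds at initialization (Line~\ref{alg:init_state}), and neither adding a node (Line~\ref{alg:add_clique_node}) nor extending the interval (Line~\ref{alg:add_clique_time_e}) changes $x$. Hence $x$ takes at most $m$ values. Similarly $y$ is either equal to $x$ (for the cliques created at initialization) or equal to a value $l$ computed at Line~\ref{alg:get_l}, which is the end $e$ of some link; so $y$ ranges over the at most $2m$ distinct link beginnings and endings. This gives at most $2^{n}\cdot m\cdot 2m = {\cal O}(2^n m^2)$ distinct cliques, and therefore ${\cal O}(2^n m^2)$ iterations.

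Next I would bound the cost of one iteration, which depends on the data structures. Storing, for every pair $\{u,v\}$, its links sorted by time lets us decide in ${\cal O}(\log m)$ whether a given interval $[x,y]$ is covered by a link of that pair (and retrieve that link's end) by binary search; storing $M$ as a balanced search tree keyed on cliques makes a lookup or insertion cost ${\cal O}(\log|M|)$ comparisons, each comparison of two node sets costing ${\cal O}(n)$. Since $\log|M| = {\cal O}(\log(2^n m^2)) = {\cal O}(n+\log m)$, one membership test or insertion costs ${\cal O}(n(n+\log m)) = {\cal O}(n^2 + n\log m)$. In the loop over the ${\cal O}(n)$ candidate nodes $v$ (Lines~\ref{alg:add_node_begin}--\ref{alg:add_node_end}), each clique test at Line~\ref{alg:check_clique_node} inspects the $\le n$ pairs $\{u,v\}$ with $u\in X$ at ${\cal O}(\log m)$ each, and each of the ${\cal O}(n)$ associated membership tests and insertions costs ${\cal O}(n^2+n\log m)$; summed over all $v$ this is ${\cal O}(n^2\log m) + {\cal O}(n^3 + n^2\log m) = {\cal O}(n^3 + n^2\log m)$. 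The interval-extension step computes $l$ by scanning the ${\cal O}(n^2)$ pairs of $X$ at ${\cal O}(\log m)$ each, i.e.\ ${\cal O}(n^2\log m)$, plus one membership test, which is dominated. Hence one iteration costs ${\cal O}(n^3 + n^2\log m)$.

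Multiplying the two bounds gives ${\cal O}(2^n m^2)\cdot{\cal O}(n^3 + n^2\log m) = {\cal O}(2^n n^3 m^2 + 2^n n^2 m^2\log m)$, as claimed. I expect the main obstacle to be the counting of distinct cliques---specifically the inductive argument that pins $x$ to link beginnings and $y$ to link beginnings and endings, which is what replaces a naive ${\cal O}(|T|)$ factor by ${\cal O}(m)$; the remainder is careful bookkeeping of the data-structure costs, where the ${\cal O}(n)$ cost of comparing node sets against the ${\cal O}(n+\log m)$ depth of $M$ is what produces the $n^3$ term.
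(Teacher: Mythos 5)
Your proposal is correct and follows essentially the same route as the paper: bound the number of iterations by the number of distinct cliques (at most $2^n$ node sets times ${\cal O}(m^2)$ intervals with endpoints anchored to link beginnings and endings), then bound the per-iteration cost at ${\cal O}(n^3+n^2\log m)$ using the same data-structure assumptions, and multiply. Your inductive justification that $x$ is always a link beginning and $y$ a link beginning or ending is slightly more explicit than the paper's phrasing, but it is the same counting argument.
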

\begin{proof}
The complexity of Algorithm~\ref{algorithm} is dominated by the complexity of its main loop. The number of iterations of this loop is bounded by the number of elements added to $S$, which is bounded by the number of subsets of $V$ times the number of sub-intervals of $T$ of the form $[b,e]$, where $b$ is the beginning of a link in $E$ and $e$ is the end of a (different) link in $E$. Therefore, the number of iterations of the loop is in ${\cal O}(2^nm^2)$.

In the following,
we assume that sets are stored as binary search trees,
so that it is possible to search and add for an element in a set $T$ in ${\cal O}(\log |T|)$.
We also assume that all links are stored in a list, 
or binary tree, and are sorted by node pair then time,
so that it is possible to search for a link in ${\cal O}(\log m)$.

Now, let us consider a clique $(X, [x,y])$ picked from $S$ at Line~\ref{alg:get_clique}. Lines~\ref{alg:add_node_begin} to~\ref{alg:add_node_end} search for cliques of the form $(X\cup\{v\}, [x,y])$, $v\not\in X$, and Lines~\ref{alg:get_l} to~\ref{alg:add_clique_time_e} search for a clique $(X, [x,y'])$, $y'>y$. We analyze the two blocks separately.

For any $v\not\in X$, Line~\ref{alg:check_clique_node} checks if for all $u\in X$ there is a link $(b,e,u,v)$ such that $[x,y]\subseteq [b,e]$. This requires at most $|X|\log m$ steps, and so it is in ${\cal O}(n\log m)$. Line~\ref{alg:ifXvinM} searches for the newly found clique in $M$, which contains ${\cal O}(2^nm^2)$ elements. Since two cliques can be compared in ${\cal O}(n)$, this search is in ${\cal O}(n\log(2^nm^2)) = {\cal O}(n^2+n\log m)$. The algorithm repeats these operations for all $v\in V\setminus X$, so less than $n$ times. Finally, Lines~\ref{alg:add_node_begin} to~\ref{alg:add_node_end} have a cost in ${\cal O}(n(n\log m+n^2+n\log m) ) = {\cal O}(n^3 + n^2\log m)$. 

Computing $l$ at Line~\ref{alg:get_l} can be done in at most ${\cal O}(\frac{|X|\cdot|X-1|}{2}\log m)$ operations, and therefore is in ${\cal O}(n^2\log m)$. Lines~\ref{alg:ifeinM} and~\ref{alg:add_clique_time_e} can be done in ${\cal O}(n^2 + n\log m)$, as above. The complexity of Lines~\ref{alg:get_l} to~\ref{alg:add_clique_time_e} is then ${\cal O}(n^2\log m)$.

We conclude that each iteration of the main loop costs no more than ${\cal O}(n^3 + n^2\log m)$. We bound the overall complexity by multiplying this cost by the bound for the number of iterations of the loop. This leads to the ${\cal O}(2^nm^2(n^3 + n^2\log m)) = {\cal O}(2^nn^3m^2+2^nn^2m^2\log m)$ complexity.
\end{proof}

\section{Equivalence with $\Delta$-cliques}

Let us consider an instantaneous link stream $L=(T,V,E)$ with $T=[\alpha, \omega]$ and a value $\Delta \le \omega-\alpha$. We first define $T_\Delta$ as $[\alpha+\Delta,\omega]$, and, for all $u$ and $v$ in $V$, the set $T_{u,v} = \cup_{(t,u,v)\in E} [t,t+\Delta]$. We then define $E_\Delta$ as the set of all tuples $(b,e,u,v)$ such that $[b,e]$ is a maximal nonempty interval included in $T_\Delta \cap T_{u,v}$, for any $u$ and $v$ in $V$. We finally obtain the simple link stream with durations $L_\Delta = (T_\Delta,V,E_\Delta)$.

\begin{theorem}
$C=(X, [x,y])$ is a maximal $\Delta$-clique of $L$ if and only if $C_\Delta = (X, [x+\Delta,y])$ is a maximal clique of $L_\Delta$.
\end{theorem}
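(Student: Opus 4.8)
The plan is to reduce the whole statement to a single pairwise fact and then observe that the time shift $(X,[x,y])\mapsto(X,[x+\Delta,y])$ preserves both the clique property and the containment order that defines maximality. The heart of the argument is a preliminary claim: for any pair $\{u,v\}\subseteq V$, the pair ``$\Delta$-interacts'' over $[x,y]$ in $L$ (i.e.\ every subinterval $I\subseteq[x,y]$ of length $\Delta$ contains an interaction time of $u,v$) if and only if $[x+\Delta,y]\subseteq T_{u,v}$. To prove this I would write a generic length-$\Delta$ window as $I=[a,a+\Delta]$ with $a\in[x,y-\Delta]$ and substitute $s=a+\Delta$. Then ``$I$ contains an interaction'' becomes ``there is $t$ with $t\le s\le t+\Delta$'', which is exactly $s\in\bigcup_{(t,u,v)\in E}[t,t+\Delta]=T_{u,v}$. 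As $a$ sweeps $[x,y-\Delta]$ the point $s$ sweeps $[x+\Delta,y]$, giving the claimed equivalence.

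Next I would combine this with the two defining features of $L_\Delta$. First, $[x,y]\subseteq T=[\alpha,\omega]$ forces $[x+\Delta,y]\subseteq[\alpha+\Delta,\omega]=T_\Delta$, so the shifted interval lives in the right domain. Second, since $[x+\Delta,y]$ is connected and the links of $E_\Delta$ are precisely the maximal nonempty intervals (the connected components) of $T_\Delta\cap T_{u,v}$, the condition $[x+\Delta,y]\subseteq T_\Delta\cap T_{u,v}$ is equivalent to the existence of a single link $(b,e,u,v)\in E_\Delta$ with $[x+\Delta,y]\subseteq[b,e]$, that is, to $u$ and $v$ being continuously linked from $x+\Delta$ to $y$ in $L_\Delta$. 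Applying the pairwise claim to every pair of $X$ then yields at once that $(X,[x,y])$ is a $\Delta$-clique of $L$ if and only if $(X,[x+\Delta,y])$ is a clique of $L_\Delta$. Hence $\phi:(X,[x,y])\mapsto(X,[x+\Delta,y])$ is a bijection between the two families of cliques; surjectivity holds because the inverse shift of any clique interval $[a,y]\subseteq T_\Delta$ lands back in $T$.

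I would then transfer maximality through $\phi$. The key point is that $\phi$ is an order isomorphism for the containment relation used in both definitions: the condition $X\subseteq X'$ is untouched, and $[x,y]\subseteq[x',y']$ (meaning $x'\le x$ and $y\le y'$) holds if and only if $[x+\Delta,y]\subseteq[x'+\Delta,y']$, since shifting both left endpoints by the same $\Delta$ does not change the comparison. Moreover, any clique dominating a nondegenerate clique (one with $y\ge x+\Delta$) is itself nondegenerate, so domination can be tested entirely within the image of $\phi$. Consequently $C$ is a maximal element of the $\Delta$-cliques of $L$ if and only if $\phi(C)=C_\Delta$ is a maximal element of the cliques of $L_\Delta$; concretely, a strictly larger clique on one side pushes or pulls back to a strictly larger clique on the other, contradicting the assumed maximality.

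The step I expect to be the main obstacle is making the preliminary pairwise claim fully rigorous at the boundary, in particular the degenerate situation $x+\Delta>y$ where the shifted interval would be empty. I would dispatch this by showing that a maximal $\Delta$-clique necessarily satisfies $y\ge x+\Delta$, so that $C_\Delta$ has a genuine nonempty interval: if $y<x+\Delta$ the $\Delta$-condition over $[x,y]$ is vacuous, and one can then enlarge $X$ toward $V$ or stretch the interval (using $\Delta\le\omega-\alpha$ to guarantee room and the convention that clique pairs are actually linked), contradicting maximality. The analogous nondegeneracy on the $L_\Delta$ side is automatic, since the links of $E_\Delta$ are nonempty intervals by construction, and this is exactly the condition that lets the order-isomorphism argument of the previous paragraph run over the full posets.
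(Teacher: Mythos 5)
Your argument is correct and follows essentially the same route as the paper's: both rest on the observation that a point $t\in[x+\Delta,y]$ lies in $T_{u,v}$ exactly when the window $[t-\Delta,t]\subseteq[x,y]$ contains an interaction of $u$ and $v$, which yields the clique bijection, after which maximality transfers because the shift preserves containment of node sets and of intervals. You are somewhat more explicit than the paper about two points it glosses over (that connectedness of $[x+\Delta,y]$ upgrades pointwise coverage by $T_\Delta\cap T_{u,v}$ to coverage by a single link of $E_\Delta$, and the degenerate case $y<x+\Delta$), but the substance of the argument is the same.
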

\begin{proof}

Assume $C$ is a $\Delta$-clique of $L$ and $C_\Delta$ is not a clique of $L_\Delta$. 
It means that there is a $t$ in $[x+\Delta,y]$ and $u$ and $v$ in $X$ such that 
$u$ and $v$ are not linked at time $t$ in $E_\Delta$.
This means that there is no link $(t',u,v)$ in $E$ with $t' \in [t-\Delta,t]$, which contradicts the assumption that $C$ is a $\Delta$-clique of $L$.

Conversely, assume $C_\Delta$ is a clique of $L_\Delta$ and $C$ is not a $\Delta$-clique of $L$. It means that there is an interval $I$ of duration $\Delta$ such that $I \subseteq [x,y]$ and for all $t$ in $I$, $(t,u,v) \not\in E$. 
If we denote by $t$ the largest element of $I$, it is at least equal to $x+\Delta$ and at most equal to $y$, 
and so there is no link containing $(t,u,v)$ in $E_\Delta$, which contradicts the assumption that $C_\Delta$ is a clique of $L_\Delta$.

Therefore, $C=(X, [x,y])$ is a $\Delta$-clique of $L$ if and only if $C_\Delta = (X, [x+\Delta,y])$ is a clique of $L_\Delta$. This is a bijection between the $\Delta$-cliques of $L$ and the cliques of $L_\Delta$ that preserves maximality.
\end{proof}

As a consequence, Algorithm~\ref{algorithm} may be used to compute the maximal $\Delta$-cliques in instantaneous link streams
(with complexity lower than
the initial algorithm published in~\cite{viard2016computing} 
by an improvement factor of $m/\log m$).

\section{Experiments}

We implemented Algorithm~\ref{algorithm}
and compared its running time with those of our previous implementation for computing $\Delta$-cliques~\cite{viard2016computing},
as well as the implementation provided by Himmel {\em et al.}
for their own algorithm~\cite{himmel2016enumerating,himmel2017adapting}.
All implementations are in Python\cite{cliques-lsCode2017}.

We used three datasets of different sizes coming from different contexts:
\begin{itemize}
\item {\sc Thiers-Highschool}~\cite{Fournet2014} is a
  trace of physical proximity between individuals, captured by
  sensors.  It was collected at a French high school in 2012 over a period of approximately 8 days.
  It contains 180 nodes and 45,047 links.
\item {\sc DNC-email} is the 2016 Democratic National Committee email leak~\cite{dnc-email},
  representing emails sent over a period of   approximately one year and four months\,\footnote{This duration covers the majority of emails;
    notice
    that a single email was sent approximately one year and a half before any other.}.
  It contains 1,866 nodes and  37,421 links.
\item {\sc Infectious} is a trace of physical proximity between visitors
  of an exhibition \cite{Isella2011What},
  collected over a period of approximately 80 days.
  It contains 10,972 nodes and  415,912 links.
\end{itemize}

\begin{figure}
  \centering
  \includegraphics[width=0.32\textwidth]{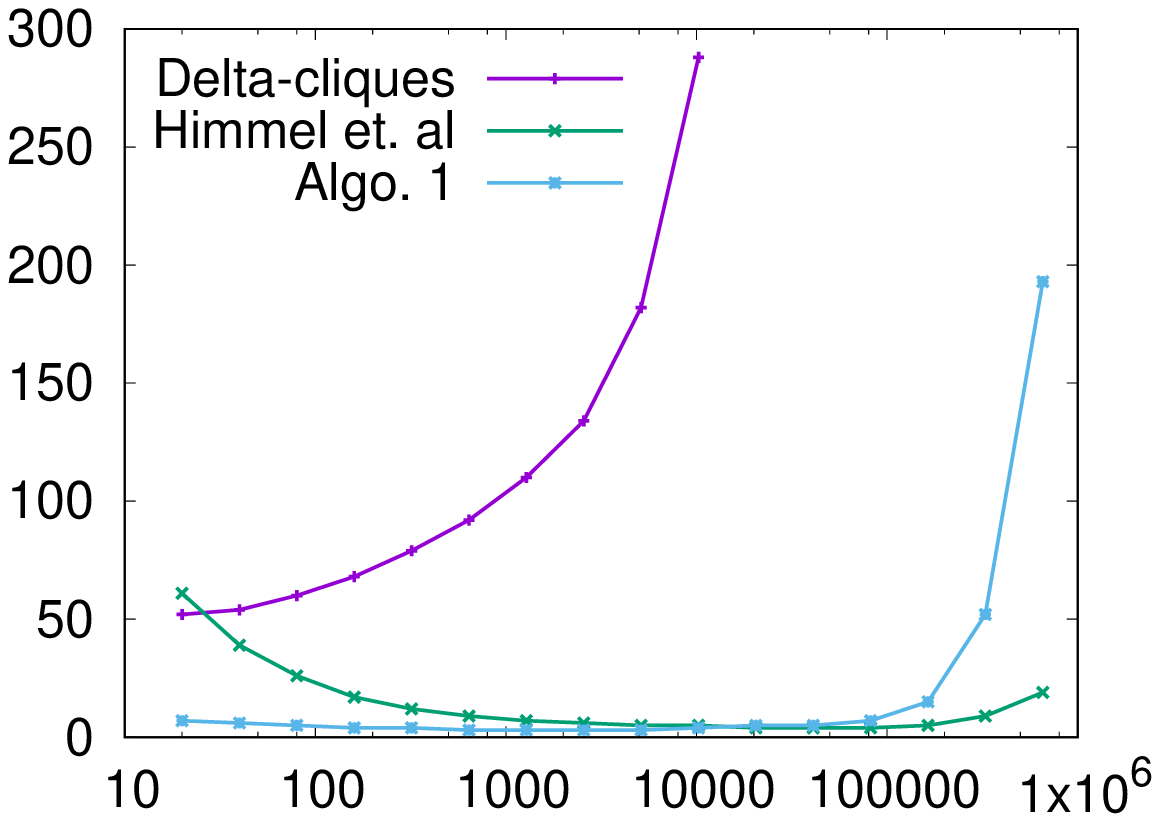}
  \includegraphics[width=0.32\textwidth]{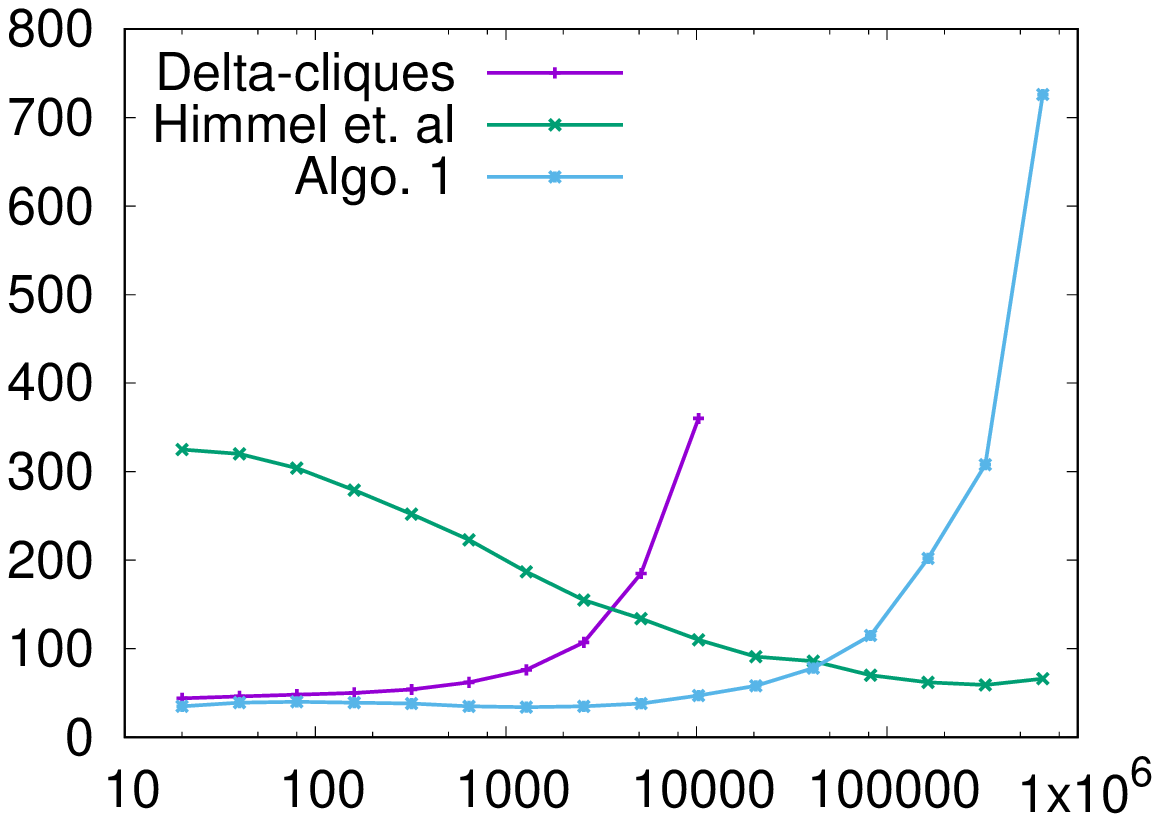}
  \includegraphics[width=0.32\textwidth]{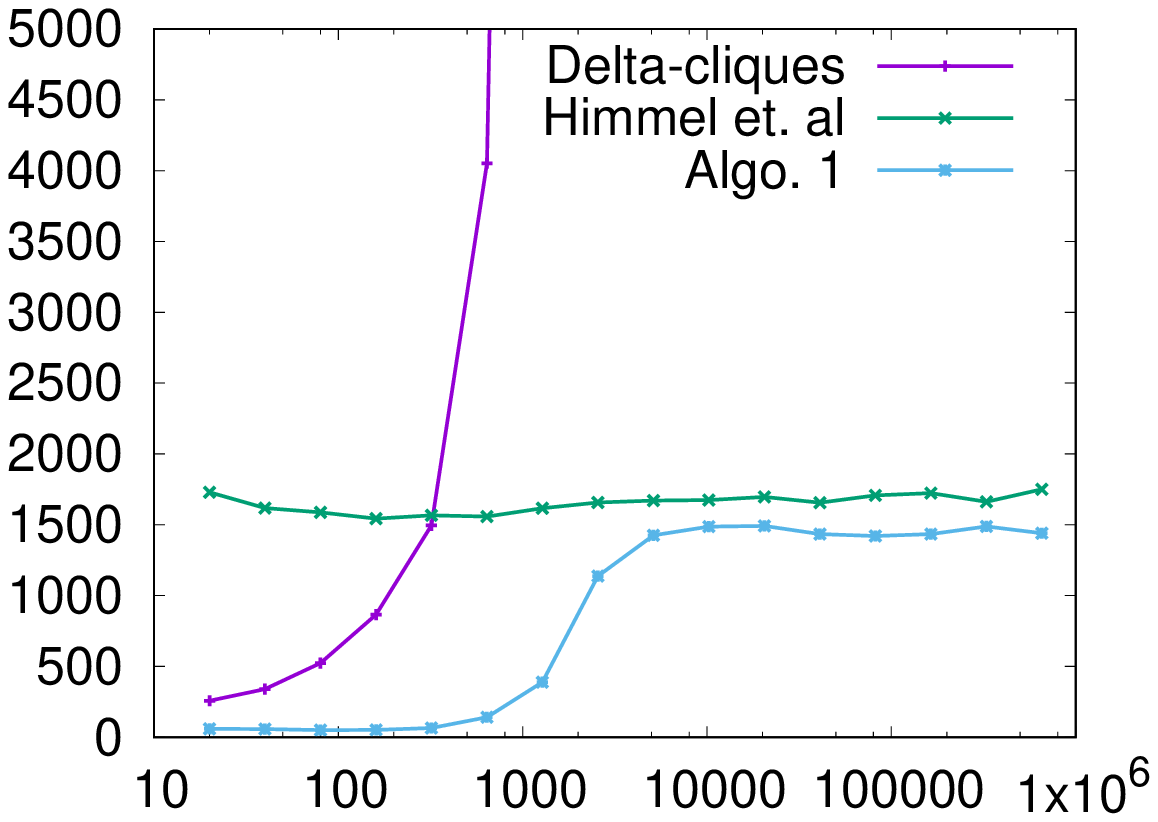}
  \caption{Running times (in seconds) for computing $\Delta$-cliques, for different values of $\Delta$
    (in seconds).
    Left: {\sc Thiers-Highschool}; middle: {\sc DNC-email}; right: {\sc Infectious}. }
  \label{fig:running_times_thiers}
\end{figure}

Results are presented in Figure~\ref{fig:running_times_thiers}.

We clearly observe two things.
First, our implementation significantly outperforms the code for $\Delta$-cliques 
for all values of $\Delta$.
Second, 
our implementation is the fastest for many relevant values of $\Delta$:
 in the cases of physical proximity our implementation is the fastest for all values of $\Delta$
 lower than 3 hours,
and in the case of emails,
it is the fastest for all values of $\Delta$ lower than 11 hours.
Notice that these values are of practical interest:  exchanging emails
at least once every hour within a group of people for a given time period, for instance, is significant.

In conclusion,
although practical evaluation of algorithms is difficult,
these experiments show that the most efficient solution depends on the target application,
and that our algorithm is appealing for small values of $\Delta$ at least in some cases of practical interest.




\bigskip
{\small\noindent\bf Acknowledgments.} This work is funded in part by the European Commission H2020 FETPROACT 2016-2017 program under grant 732942 (ODYCCEUS), by the ANR (French National Agency of Research) under grants ANR-15-CE38-0001 (AlgoDiv) and ANR-13-CORD-0017-01 (CODDDE), by the French program "PIA - Usages, services et contenus innovants" under grant O18062-44430 (REQUEST), and by the Ile-de-France program FUI21 under grant 16010629 (iTRAC).

\bibliographystyle{plain}

\bibliography{algo}

\end{document}